\documentclass[superscriptaddress,a4paper, amsfonts, amssymb, amsmath, reprint, showkeys, nofootinbib, twoside,notitlepage,onecolumn]{revtex4-1}

\def\prd{{Phys.~Rev.~D}}        
\usepackage{amsmath,amstext}
\usepackage[T1]{fontenc}
\usepackage{amssymb}
\usepackage{graphicx}
\usepackage{ae,aecompl}

\DeclareFontFamily{OT1}{pzc}{}
\DeclareFontShape{OT1}{pzc}{m}{it}{<-> s * [1.10] pzcmi7t}{}
\DeclareMathAlphabet{\mathpzc}{OT1}{pzc}{m}{it}

\usepackage{hyperref}
\usepackage{amsmath}
\usepackage{amssymb}
\usepackage{mathtools}
\usepackage{bm}
\usepackage{cleveref}
\usepackage{tensor}
\usepackage{braket}
\usepackage{enumitem}
\usepackage{mhchem}
\usepackage{amsthm}
\usepackage{nccmath}
\usepackage{mathrsfs}
\usepackage{color}

\newcommand{\spc}{\quad \quad \quad}

\def\be{\begin{equation}}
\def\ee{\end{equation}}
\def\beq{\begin{eqnarray}}
\def\eeq{\end{eqnarray}}

\theoremstyle{definition}

\theoremstyle{theorem}
\newtheorem{theorem}{Theorem}

\begin{document}
\title{The initial data of effective field theories of relativistic viscous fluids and gravity}
\author{Lorenzo Gavassino} 
\affiliation{Department of Applied Mathematics and Theoretical Physics, University of Cambridge, Wilberforce Road, Cambridge CB3 0WA, United Kingdom}
\author{\'Aron D. Kov\'acs}
\affiliation{Centre for Geometry, Analysis and Gravitation, School of Mathematical Sciences, Queen Mary University of London, Mile End Road, London E1 4NS, United Kingdom}
\author{Harvey S. Reall}
\affiliation{Department of Applied Mathematics and Theoretical Physics, University of Cambridge, Wilberforce Road, Cambridge CB3 0WA, United Kingdom}

\begin{abstract}
There has been recent progress in developing well-posed theories of relativistic viscous hydrodynamics and of gravitational effective field theories. These have in common the feature that they introduce unphysical degrees of freedom. We address the problem of how these should be treated. We propose a ``reduction of order'' approach which is applied not at the level of equations of motion but only to initial data. This specifies uniquely the data for the unphysical modes in terms of the data for the physical modes. We argue that the apparent breaking of Lorentz invariance associated with this approach is not a problem provided one restricts to Lorentz frames for which the assumptions of effective field theory are manifestly valid.
\end{abstract} 
\maketitle

\section{Introduction}
\vspace{-0.4cm}

In this paper, we will address a problem that occurs both in theories of relativistic viscous hydrodynamics and in effective field theory (EFT) extensions of General Relativity. In hydrodynamics, a recent breakthrough is the discovery by Bemfica, Disconzi, Noronha, and Kovtun (BDNK) of equations of motion for a relativistic viscous fluid for which the initial value problem is well-posed \cite{Bemfica2019_conformal1,Kovtun2019,BemficaDNDefinitivo2020}. A nice way of understanding this is to adopt an EFT perspective. Starting in, say, the Landau frame \cite{landau6}, EFT field redefinitions can be used to bring the equations of motion to the BDNK form \cite{Kovtun2019}. Similarly, in gravitational physics, for a very general class of EFTs it has been shown by Figueras, Held and Kov\'acs (FHK) \cite{Figueras:2024bba} that field redefinitions can be used to obtain a theory whose equations of motion admit a well-posed initial value problem, at least within the regime of validity of EFT.

A feature common to both of these developments is the introduction of unphysical degrees of freedom \cite{Kovtun2019,GavassinoLyapunov_2020,MullinsGavassinoFluctuwithInfo2023ott,GavassinoFirstOrderWithInfo2024ufs,HellerOrderByOrder:2025dxh}. In each case we have an EFT which is supposed to describe certain (``light'' or ``slow'') degrees of freedom but the well-posed systems of equations just described contain additional (``heavy'' or ``fast") degrees of freedom. In the BDNK theory, this occurs because the equations of motion are second-order in time derivatives, as opposed to the Landau-frame equation, which are of first order in time in the fluid's local rest frame. Hence, in the initial-value problem, one must specify additional initial data, indicating the presence of unphysical degrees of freedom. In the case of a gravitational EFT, the equations involve higher than second derivatives of the metric tensor and so, even though the FHK formulation is well-posed, one still needs to specify initial data for these unphysical massive degrees of freedom. 

Is the presence of these unphysical degrees of freedom a problem? In the case of hydrodynamics, they should rapidly decay by dissipation, leaving an asymptotic solution in which only the light degrees of freedom are excited \cite{Geroch95,LindblomRelaxation1996}. But how do we know that this is the (physically) {\it correct} solution arising from the prescribed initial data for these light degrees of freedom? In gravity, there is no dissipation so the problem is worse: the additional degrees of freedom exhibit high-frequency oscillations (or rapid exponential growth), which can infect the physical degrees of freedom.

We will discuss a simple way of dealing with this problem. The idea is based on the ``reduction of order'' approach to the equations of an EFT, in which one repeatedly substitutes lower order equations into higher derivative terms in order the reduce the order of the highest time derivatives to $1$ (hydrodynamics \cite{Armas:2020mpr,Basar:2024qxd,BhambureDensityFrame2024axa,GavassinoParabolic2025hwz}) or $2$ (gravity \cite{Parker:1993dk,Flanagan:1996gw}). This procedure has mostly been discussed as a way of eliminating the unphysical degrees of freedom at the level of the equations of motion. But this procedure breaks Lorentz invariance (or general covariance) and often the resulting equations do {\it not} admit a well-posed initial value problem \cite{Flanagan:1996gw}, so this may not always be a useful approach for e.g. performing numerical simulations. Instead, we will retain here the covariant and well-posed higher-order equations of motion, and we will use the reduction of order procedure only to fix the initial data for higher time derivatives of the fields. This corresponds to fixing uniquely the initial data for the unphysical degrees of freedom in terms of the data for the physical degrees of freedom. For gravity, this approach was suggested by FHK \cite{Figueras:2024bba} and has been used for scalar field EFTs in \cite{Figueras:2025wtx}. However, for hydrodynamics it does not appear to have been discussed or implemented previously (e.g., it is not used in the numerical simulations of \cite{Pandya:2021ief,Pandya:2022pif,Pandya:2022sff,Clarisse:2025lli}).

A concern about this approach is that the reduction of order requires a choice of time coordinate, and therefore still breaks Lorentz or diffeomorphism invariance, although now at the level of initial data instead of the equations of motion. So one might be concerned that results obtained using this method will depend on the choice of time function. However, we will argue that this is not the case, provided one restricts to Lorentz frames (or coordinate charts) for which EFT is manifestly valid, in the sense that gradients of fields are small compared to the UV scale.

Throughout the article, we adopt the metric signature $(-,+,+,+)$, and work in natural units, $c=\hbar=k_B=1$.

\newpage
\vspace{-0.3cm}
\section{Two toy models}
\vspace{-0.2cm}

To warm up, we consider here a couple of causal and stable linear effective field theories of matter in 1+1 dimensions (for this kind of theory, stability automatically implies well-posedness forward in time \cite[\S 3.10]{rauch2012partial}). These theories possess extra degrees of freedom, which need to be initialized properly. We will demonstrate that, within the EFT framework, there is only one correct way to do so. 

\vspace{-0.3cm}
\subsection{Causal heat conduction in a moving rod}\label{sec:rod}
\vspace{-0.2cm}

We consider an infinitely long metallic rod aligned with the $x$-axis and moving at a constant speed $v$ along this direction. Our goal is to study the evolution of its temperature profile, $T(t,x)$. In the BDNK framework, a natural hyperbolic (and well-posed) EFT formulation of the advection-diffusion equation is \cite{GavassinoAntonelli:2025umq}
\begin{equation}\label{heatBDNK}
\gamma(\partial_t {+}v\partial_x)T =\lambda (\partial^2_x{-}\partial^2_t) T +\mathcal{O}(\lambda^2) \, ,
\end{equation}
where $\gamma = (1-v^2)^{-1/2}$ and the diffusion coefficient $\lambda$ is of the order of the mean free path, and may be regarded as the UV cutoff scale of the effective theory. That is, the theory is expanded in powers of $\lambda\partial$. Now, since equation \eqref{heatBDNK} is of second order in time, we need to fix some initial profiles for both $T$ and $\partial_t T$. However, let us note that, according to \eqref{heatBDNK}, $\partial_t T=-v\partial_x T+\mathcal{O}(\lambda)$, which also implies $\partial_t^2 T=v^2\partial^2_x T+\mathcal{O}(\lambda)$. Hence, we can replace the second time derivative in \eqref{heatBDNK} with a second space derivative, as the related error is of order $\lambda^2$. The result is \cite{GavassinoParabolic2025hwz}
\begin{equation}\label{dt1heatBDNK}
(\partial_t {+}v\partial_x)T =\gamma^{-3}\lambda \partial^2_x T +\mathcal{O}(\lambda^2) \, . 
\end{equation}
Now, the key observation is that, if we are in a regime where \eqref{heatBDNK} applies (that is, if terms of order $\lambda^2$ can be neglected), then we are automatically in a regime where \eqref{dt1heatBDNK} also holds. Consequently, when specifying initial data for \eqref{heatBDNK}, we have no freedom: the condition \eqref{dt1heatBDNK} must be enforced. Therefore, the only physically sensible Cauchy problem for the BDNK-type equation \eqref{heatBDNK} is
\vspace{-0.2cm}
\begin{equation}\label{correctCauchy}
\begin{cases}
(\partial_t {+}v\partial_x)T =\gamma^{-1}\lambda (\partial^2_x{-}\partial^2_t) T \, , \\
T|_{t=0}=T_0 \, , \\
\partial_t T|_{t=0}=-v\partial_x T_0 +\gamma^{-3}\lambda\partial^2_x T_0 \, ,\\ 
\end{cases}
\end{equation}
which allows us to freely specify only the initial temperature profile $T_0(x)$.

It is worth emphasizing that equation \eqref{dt1heatBDNK} is not new. In fact, within the ``density-frame'' formulation of hydrodynamics \cite{Armas:2020mpr,Basar:2024qxd}, one replaces \eqref{heatBDNK} with \eqref{dt1heatBDNK} altogether, thereby eliminating the additional degrees of freedom directly at the level of the equations of motion (albeit at the cost of forfeiting causality and formal covariance, although well-posedness is preserved). Here, however, we are proposing a different perspective. If one chooses to retain \eqref{heatBDNK}, then \eqref{dt1heatBDNK} should instead be regarded as a tool for constraining the initial data, ensuring that the redundant degrees of freedom are properly fixed. Any alternative prescription for initializing $\partial_t T$ would unavoidably introduce spurious non-hydrodynamic modes into the initial state, leading to a rapid transient relaxation on a timescale set by the UV cutoff $\lambda$, as shown in the example of figure \ref{fig:heat}.

\begin{figure}[b!]
    \centering
\includegraphics[width=0.43\linewidth]{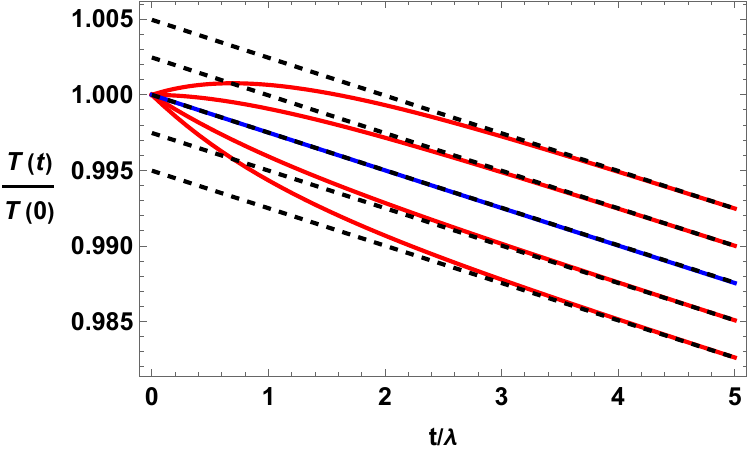}
\caption{Comparison between various solutions of \eqref{heatBDNK} (solid lines), under the ansatz $T(t,x)=T(t)e^{i k x}$, with $\lambda k=0.05$ and $v=0$. The red curves represent solutions initialized with the same $T(0)$, but various choices of $\partial_t T(0)\sim \lambda k^2$. These solutions exhibit an initial transient relaxation on a timescale $t \sim \lambda$, after which they converge to solutions of \eqref{dt1heatBDNK} (dashed) with a \textit{different} $T(0)$. The only trajectory that does not display any transient behavior is the one arising from the Cauchy problem \eqref{correctCauchy} (blue~curve), which perfectly overlaps the one solution of \eqref{dt1heatBDNK} having the same $T(0)$. Note that, in general, the order-reduced equation may not be well-posed. Here, this property happens to hold, so we can actually contrast solutions of \eqref{heatBDNK} and \eqref{dt1heatBDNK}.}
    \label{fig:heat}
\end{figure}

\subsection{Dispersive sound in ultradense matter}

As an example of a non-dissipative effective field theory, we consider the Ruderman-Bludman model of sound dispersion in ultradense matter \cite{Bludman1968,Fox1970FTL,GavassinoBounds2023myj}. Specifically, let $\phi(t,x)$ be the density perturbation of an otherwise uniform medium at rest. Then, to leading order in some microscopic lengthscale $\lambda$, we can write
\begin{equation}\label{dispersive}
(\partial_t^2{-}c_s^2 \partial_x^2)\phi +\lambda^2(\partial^2_t{-}\partial^2_x)^2 \phi+\mathcal{O}(\lambda^4) =0 \, ,
\end{equation}
where $c_s$ is the infrared speed of sound. This equation describes deviations from the usual phononic dispersion relation $\text{``frequency''}=c_s \times \text{``wavenumber''}$. It is certainly a causal theory, and it was shown in \cite{GavassinoBounds2023myj} that it is covariantly stable (and thus well-posed) if and only if $c_s^2\in [0,1]$. It is fourth-order in time, and thus possesses two extra degrees of freedom besides the two usual sound waves. To initialize these, we follow the same strategy as before. Since equation \eqref{dispersive} entails $\partial^2_t \phi=c_s^2\partial^2_x \phi+\mathcal{O}(\lambda^2)$, we can replace the time derivatives with space derivatives in the $\mathcal{O}(\lambda^2)$-term, as the price that we pay can be reabsorbed into $\mathcal{O}(\lambda^4)$. The result is
\begin{equation}\label{dispersiveacausal}
(\partial_t^2{-}c_s^2 \partial_x^2)\phi +\lambda^2(1{-}c_s^2)^2 \partial^4_x \phi+\mathcal{O}(\lambda^4) =0\, .
\end{equation}
This equation and its time derivative uniquely fix the initial values of $\partial_t^2\phi$ and $\partial^3_t \phi$. The resulting Cauchy problem is
\begin{equation}\label{cauchysound}
\begin{cases}
(\partial_t^2{-}c_s^2 \partial_x^2)\phi +\lambda^2(\partial^2_t{-}\partial^2_x)^2 \phi =0 \, , \\
\phi|_{t=0}=\phi_0 \, , \\
\partial_t\phi|_{t=0}=\phi_1 \, , \\
\partial_t^2\phi|_{t=0}=c_s^2 \partial^2_x\phi_0-\lambda^2(1{-}c_s^2)^2\partial^4_x\phi_0 \, , \\
\partial_t^3\phi|_{t=0}=c_s^2 \partial^2_x\phi_1-\lambda^2(1{-}c_s^2)^2\partial^4_x\phi_1 \, , \\
\end{cases}
\end{equation}
which allows us to freely specify only the initial density profile $\phi_0(x)$ and its time derivative $\phi_1(x)$.

In figure \ref{fig:soundone}, we compare the solution of \eqref{cauchysound} with those of \eqref{dispersive} obtained using different initial values for $\partial^2_t \phi$ and $\partial^3_t \phi$. As can be seen, any deviation from \eqref{cauchysound} leads to rapid oscillations with a characteristic frequency of order $\lambda^{-1}$. These oscillations are unphysical, since the effective field theory framework ceases to be valid at such high frequencies (with $\lambda$ representing the UV cutoff scale). Therefore, \eqref{cauchysound} provides the only physically meaningful initial data for the system.

It is worth noting that, in a non-dissipative system, an improper initialization of the additional degrees of freedom has far more severe consequences than in a dissipative one. In the latter case, dissipation tends to damp the fast modes, driving them rapidly to zero so that the system eventually relaxes toward a physical solution, albeit not the correct one (as illustrated in figure \ref{fig:heat}). In contrast, in a non-dissipative system, once these extra degrees of freedom are excited, they persist indefinitely (as illustrated in figure \ref{fig:soundone}).

\begin{figure}[b!]
    \centering
\includegraphics[width=0.43\linewidth]{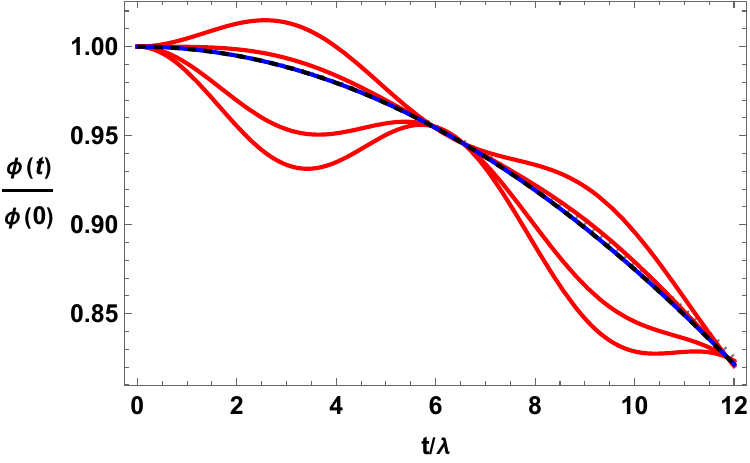}
\includegraphics[width=0.43\linewidth]{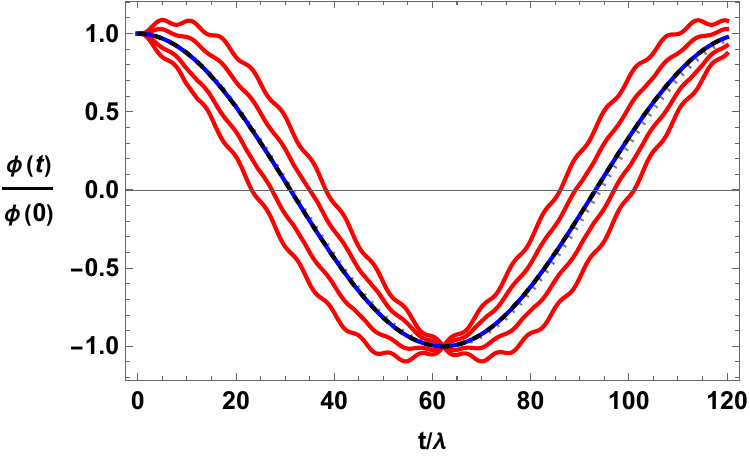}
\caption{Comparison between solutions of \eqref{dispersive} for various initial data (solid lines), under the ansatz $\phi(t,x)=\phi(t)e^{i k x}$, with $\lambda k=0.2$ and $c_s=1/2$. As initial conditions, we fix $\phi(0)$, and set $\partial_t \phi(0)=0$ in all cases. The blue curves show the solution of \eqref{cauchysound}. The red curves show solutions of \eqref{dispersive} obtained by varying the higher-order derivatives in the initial data. Specifically, in the left panel we vary $\partial_t^2\phi(0)\sim c_s^2 k^2$ while keeping $\partial_t^3 \phi(0)=0$, whereas in the right panel we fix $\partial_t^2\phi(0)$ according to \eqref{cauchysound} and vary $\partial_t^3 \phi(0)\sim c_s^2 k^2/\lambda$ (we just need $\lambda^3 \partial^3_t\ll1$ for validity of EFT, and $c_s^3k^3$ would be too small to be visible). All such solutions develop additional fast oscillations with frequency of order $\lambda^{-1}$, superimposed on the main infrared sound mode. In the right panel, changes of initial data also result in a global shift of phase of the underlying infrared mode. The only trajectory free from fast oscillations is the one corresponding to the Cauchy problem \eqref{cauchysound} (blue curve), which exactly overlaps with the solution of \eqref{dispersiveacausal} initialized with the same $\phi(0)$ and $\partial_t \phi(0)$ (black dashed). For reference, we also include the solution of $(\partial_t^2 - c_s^2\partial_x^2)\phi = 0$ with identical initial data (dotted gray line).}
\label{fig:soundone}
\end{figure}

\section{General considerations}

The general picture emerging from the above examples is as follows. When an effective field theory contains additional degrees of freedom, these correspond to fast modes (evolving on timescales comparable to the UV cutoff $\lambda$) and therefore lie outside the regime of validity of the theory. Consequently, the initial conditions must be chosen so that these fast modes are not excited. This can be achieved through the standard order-reduction procedure, which replaces the highest time derivatives with suitable combinations of spatial derivatives. The resulting equation retains the same level of accuracy as the original equation (the error scales with the same power of $\lambda$) but has fewer degrees of freedom. Thus, it can be used to fully constrain the extra initial data, automatically setting the fast modes to zero.

From the viewpoint of an effective field theorist, this is all rather obvious: in a low-energy effective theory, one should not construct initial states containing highly massive excitations. However, when a classical field theory is treated merely as a system of partial differential equations to be numerically solved with generic initial data, these consistency considerations are often overlooked (especially in the hydrodynamic context \cite{DisconziInitialData:2024dex}).

There are some general mathematical assumptions underlying the ``effective field theory reasoning'' above that warrant a more careful discussion, and shall be the topic of this section and the next one:
\begin{itemize}
\item Is it always true that \textit{all} the extra degrees of freedom are fast? This is discussed in Sec. \ref{sec:fast}.
\item Is it always possible to fix \textit{all} the extra degrees of freedom via order reduction? This is discussed in Sec. \ref{sec:Higher}.
\item Is the order-reduced equation always consistent with the original equation? This is discussed in Sec. \ref{sec:nosurv}.
\item Is the order-reduction procedure consistent with Lorentz invariance? This is discussed in Sec. \ref{sec:Lorentz}.
\end{itemize}
To keep the discussion fully rigorous, we shall specialize the first three questions to theories linearized about a constant background state. The non-linear case is discussed less formally in appendices (where feasible). 

\vspace{-0.2cm}
\subsection{The extra degrees of freedom are always fast}\label{sec:fast}
\vspace{-0.2cm}

The claim that ``all additional degrees of freedom evolve on a fast timescale'' can be formulated precisely and demonstrated for linear theories. Specifically, we have the following result.
\begin{theorem}\label{theo1}
Let $\phi:\textup{``Minkowski''}\rightarrow \mathbb{C}$ be a classical scalar field, governed by the linear partial differential equation
\begin{equation}\label{lineargeneral}
\mathcal{N}(\partial_t,\partial_j)\phi +\lambda \, \mathcal{M}(\partial_t,\partial_j)\phi=0\, ,
\end{equation}
where $\mathcal{N}(a,b_j)$ and $\mathcal{M}(a,b_j)$ are polynomials in $a$ and $b_j$ with constant coefficients. Call $\mathfrak{n}$ and $\mathfrak{m}$ the degrees of respectively $\mathcal{N}(a,b_j)$ and $\mathcal{M}(a,b_j)$ in the variable $a$ (and assume that it is independent of $b_j$). If $\mathfrak{m}\,{>}\,\mathfrak{n}$, then there are exactly $\mathfrak{m}{-}\mathfrak{n}$ dispersion relations (counting multiplicities) whose frequency tends to complex infinity as $\lambda\rightarrow 0$.
\end{theorem}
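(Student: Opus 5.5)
We pass to Fourier space with plane waves $\phi\propto e^{-i\omega t+ik_jx^j}$. Equation \eqref{lineargeneral} then becomes the dispersion relation
\begin{equation*}
P_\lambda(\omega,k)\equiv \mathcal{N}(-i\omega,ik_j)+\lambda\,\mathcal{M}(-i\omega,ik_j)=0 .
\end{equation*}
We fix $k$ throughout. Because the degree of $\mathcal{M}$ in $a$ is $\mathfrak{m}$ independently of $b_j$, the leading coefficient $M_{\mathfrak{m}}(k)$ of $\omega^{\mathfrak{m}}$ in $\mathcal{M}(-i\omega,ik_j)$ is nonzero. The same holds for the leading coefficient $N_{\mathfrak{n}}(k)$ of $\omega^{\mathfrak{n}}$ in $\mathcal{N}$. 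Hence, for $\lambda\neq0$, $P_\lambda$ is a polynomial in $\omega$ of degree exactly $\mathfrak{m}>\mathfrak{n}$, with leading coefficient $\lambda M_{\mathfrak{m}}$. It therefore has exactly $\mathfrak{m}$ roots $\omega_1(\lambda),\dots,\omega_{\mathfrak{m}}(\lambda)$, counted with multiplicity, and these are the dispersion relations. At $\lambda=0$ the polynomial degenerates to $\mathcal{N}$, which has only $\mathfrak{n}$ roots. The task is to show that the missing $\mathfrak{m}-\mathfrak{n}$ roots go to infinity and the remaining $\mathfrak{n}$ stay bounded.

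For the bounded roots we use continuity of roots, via Hurwitz/Rouché. Take any disc $D$ of radius $R$ containing all roots of $\mathcal{N}(-i\omega,ik)$ in its interior. On $\partial D$ we have $|\mathcal{N}|\ge c>0$, while $|\lambda\mathcal{M}|\le \lambda C_R$. For $\lambda$ small, Rouché's theorem therefore says $P_\lambda$ has exactly $\mathfrak{n}$ roots in $D$, and these converge to the roots of $\mathcal{N}$. Since $R$ can be taken arbitrarily large, every other root eventually leaves every bounded disc. Consequently exactly $\mathfrak{m}-\mathfrak{n}$ roots satisfy $|\omega|\to\infty$ as $\lambda\to0$.

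To make ``tends to infinity'' quantitative, and to check that no root escapes only partially, we apply the rescaling $\omega=\lambda^{-1/(\mathfrak{m}-\mathfrak{n})}z$. We multiply $P_\lambda$ by $\lambda^{\mathfrak{n}/(\mathfrak{m}-\mathfrak{n})}$. The terms $N_{\mathfrak{n}}z^{\mathfrak{n}}$ and $\lambda M_{\mathfrak{m}}\omega^{\mathfrak{m}}$ then scale identically, and all other terms are suppressed by positive powers of $\lambda$. The limiting equation is
\begin{equation*}
z^{\mathfrak{n}}\left(N_{\mathfrak{n}}+M_{\mathfrak{m}}z^{\mathfrak{m}-\mathfrak{n}}\right)=0 ,
\end{equation*}
up to the factors of $i$ coming from $(-i\omega)$. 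Applying Rouché again around each nonzero root $z_*$ gives $\mathfrak{m}-\mathfrak{n}$ roots with $\omega\sim z_*\lambda^{-1/(\mathfrak{m}-\mathfrak{n})}$, which diverge.

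The main subtlety is keeping track of multiplicities and uniformity when $\mathcal{N}$ has repeated roots. The argument-principle counting handles this automatically, because it counts zeros with multiplicity. The hypothesis that the degrees in $a$ do not depend on $b_j$ is exactly what guarantees $M_{\mathfrak{m}}(k),N_{\mathfrak{n}}(k)\neq0$ for every $k$, so the count is valid for all wavevectors.
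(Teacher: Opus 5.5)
Your proof is correct. Its counting half uses the same tool as the paper, continuity of polynomial roots (Rouch\'e/Hurwitz), but applies it in a different chart. You use Rouch\'e on a large disc in the $\omega$-plane to show that exactly $\mathfrak{n}$ roots stay bounded and converge to the roots of $\mathcal{N}$; the $\mathfrak{m}-\mathfrak{n}$ fast roots are then obtained by complement. The paper instead inverts $y=1/\omega$, so the fast roots become the roots of the reversed polynomial that converge to $y=0$. Their number is then read off directly as the multiplicity of $y=0$ at $\lambda=0$, which is exactly $\mathfrak{m}-\mathfrak{n}$ because $a_{\mathfrak{n}}(0)\neq0$.

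Your second step, the dominant-balance rescaling $\omega=\lambda^{-1/(\mathfrak{m}-\mathfrak{n})}z$, has no counterpart in the paper and gives more than the theorem asserts. It gives the divergence rate, and it shows the fast roots are simple for small $\lambda$, with $\omega\approx z_*\lambda^{-1/(\mathfrak{m}-\mathfrak{n})}$ and $z_*^{\mathfrak{m}-\mathfrak{n}}=-N_{\mathfrak{n}}/M_{\mathfrak{m}}$. This reproduces $\omega\approx-i\gamma/\lambda$ for the moving rod and $\omega\approx\pm1/\lambda$ for the Ruderman--Bludman model, whose expansion parameter is $\lambda^2$. The price is that this single-scale balance relies on the exact form $\mathcal{N}+\lambda\mathcal{M}$ with $M_{\mathfrak{m}}\neq0$. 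The paper mentions a generalization $\mathcal{N}+\lambda\mathcal{M}_1+\lambda^2\mathcal{M}_2+\dots$ after its proof; there, higher-order operators may have higher $a$-degree, the fast roots can lie on several different scales, and you would need a Newton-polygon analysis. The pure counting argument, yours or the paper's, goes through unchanged in that case.

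Two cosmetic points: the Rouch\'e bound should read $|\lambda|C_R$, and the rescaling needs $\lambda>0$ or a fixed branch of $\lambda^{-1/(\mathfrak{m}-\mathfrak{n})}$.
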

\begin{proof}
If we set $\phi\propto e^{ik_j x^j -i\omega t}$, equation \eqref{lineargeneral} reduces to
\begin{equation}\label{kippo}
\mathcal{N}(-i\omega,ik_j) +\lambda \, \mathcal{M}(-i\omega,ik_j)=0\, .
\end{equation}
Fixed the value of $k_j$, we can regard the left-hand side of \eqref{kippo} as a polynomial $\mathcal{P}_\lambda (\omega)$, which has degree $\mathfrak{m}$ when $\lambda\neq 0$, and has degree $\mathfrak{n}$ when $\lambda=0$. This can be written explicitly as follows (notice the factors $\lambda$):
\begin{equation}\label{ilpollo}
\lambda\, a_\mathfrak{m}(\lambda)\, \omega^\mathfrak{m}+\lambda\, a_{\mathfrak{m}-1}(\lambda)\, \omega^{\mathfrak{m}-1}+...+\lambda\, a_{\mathfrak{n}+1}(\lambda)\, \omega^{\mathfrak{n}+1}+ a_\mathfrak{n}(\lambda)\, \omega^\mathfrak{n}+ a_{\mathfrak{n}-1}(\lambda)\, \omega^{\mathfrak{n}-1}+...+ a_1(\lambda)\, \omega+ a_0(\lambda)=0 \, ,
\end{equation}
where $a_i(\lambda)$ are smooth functions, with $a_\mathfrak{n}(0)\neq 0$.
Dividing both sides by $\omega^\mathfrak{m}$, and setting $y=1/\omega$, we obtain
\begin{equation}\label{ilpollo2}
\lambda\, a_\mathfrak{m}(\lambda)+\lambda\, a_{\mathfrak{m}-1}(\lambda) \, y+...+\lambda\, a_{\mathfrak{n}+1}(\lambda)\, y^{\mathfrak{m}-\mathfrak{n}-1}+ a_\mathfrak{n}(\lambda)\, y^{\mathfrak{m}-\mathfrak{n}}+ a_{\mathfrak{n}-1}(\lambda)\, y^{\mathfrak{m}-\mathfrak{n}+1}+...+ a_1(\lambda)\, y^{\mathfrak{m}-1}+ a_0(\lambda) \, y^{\mathfrak{m}}=0 \, .
\end{equation}
Now, we observe that, when $\lambda=0$, the polynomial \eqref{ilpollo2} reduces to
\begin{equation}
y^{\mathfrak{m}-\mathfrak{n}} [a_\mathfrak{n}(0)+ a_{\mathfrak{n}-1}(0)\, y+...+ a_1(0)\, y^{\mathfrak{n}-1}+ a_0(0) \, y^{\mathfrak{n}}]=0 \, ,  
\end{equation}
and thus admits $y=0$ among its roots, with exact multiplicity $\mathfrak{m}-\mathfrak{n}$ (since $a_\mathfrak{n}(0)\neq 0$). By the standard theorem stating that polynomial roots depend continuously on the coefficients, it follows that for small $\lambda$, \eqref{ilpollo2} has precisely $\mathfrak{m}-\mathfrak{n}$ roots approaching $0$ as $\lambda\rightarrow 0$. Recalling that $\omega=1/y$, we then conclude that \eqref{ilpollo} has $\mathfrak{m}-\mathfrak{n}$ roots diverging to complex infinity, as claimed.
\end{proof}
The above theorem admits several easy generalizations. For example, one can take $[\mathcal{N}{+}\lambda \mathcal{M}_1{+}\lambda^2 \mathcal{M}_2{+}...]\phi=0$, and just define $\mathcal{M}(\lambda)=\mathcal{M}_1{+}\lambda\mathcal{M}_2{+}...$. Then, the proof proceeds more or less unchanged. The generalization to the case where $\phi$ is a list of tensors/spinors is also straightforward ($\mathcal{N}$ and $\mathcal{M}$ become matrices, and one needs to adjust the degree of \eqref{ilpollo} to account for the matrix size). In appendix \ref{appAAA}, we sketch an argument for the non-linear case.

\subsection{Reducing multiple orders}\label{sec:Higher}

As shown in the previous subsection, for a linear effective theory of the form  
\begin{equation}\label{fieldiamlo}
\mathcal{N}(\partial_t,\partial_j)\phi 
+ \lambda\, \mathcal{M}(\partial_t,\partial_j)\phi 
+ \mathcal{O}(\lambda^2) = 0\, ,
\end{equation}
among the $\mathfrak{m} \,{=}\, \text{deg}_a[\mathcal{M}(a,b_j)]$ total degrees of freedom, only  
$\mathfrak{n}\, {=}\, \text{deg}_a[\mathcal{N}(a,b_j)]$ correspond to slow (and thus physical) modes,  
whereas the remaining $\mathfrak{m}{-}\mathfrak{n}$ modes become infinitely fast in the limit of small $\lambda$. Therefore, for the initialization procedure to be well-defined, we must ensure the existence of an order-reduction scheme that can \textit{always} lower the order of \eqref{fieldiamlo} to $\mathfrak{n}$. Equivalently, we need to demonstrate that, by applying a suitable differential operator of the form  
$1+\lambda\,\mathcal{Q}(\partial_t,\partial_j)+\mathcal{O}(\lambda^2)$  
(with $\mathcal{Q}(a,b_j)$ a polynomial) to both sides of \eqref{fieldiamlo}, the resulting equation  
\begin{equation}\label{fieldiamlo33}
\mathcal{N}(\partial_t,\partial_j)\phi 
+ \lambda\, [\mathcal{M}(\partial_t,\partial_j)
+ \mathcal{Q}(\partial_t,\partial_j)\mathcal{N}(\partial_t,\partial_j)]\phi 
+ \mathcal{O}(\lambda^2) = 0
\end{equation}
can be made of order $\mathfrak{n}$ in time derivatives. As it turns out, the answer is affirmative, under very general assumptions.
\begin{theorem}\label{theo2}
Let $\mathcal{M}(a,b_j)$ and $\mathcal{N}(a,b_j)$ be polynomials in the variables $a$ and $b_j$. Assume that $\mathcal{M}$ and $\mathcal{N}$ have degrees $\mathfrak{m}$ and $\mathfrak{n}$ in $a$, respectively, with $\mathfrak{m}>\mathfrak{n}$. Suppose also that the coefficient of $a^{\mathfrak{n}}$ in $\mathcal{N}$ is equal to $1$. Then, there exists a polynomial $\mathcal{Q}(a,b_j)$ such that the combination $\mathcal{M}(a,b_j)+\mathcal{Q}(a,b_j)\mathcal{N}(a,b_j)$ has degree $\mathfrak{n}$ in $a$ or lower.
\end{theorem}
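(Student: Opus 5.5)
The plan is to use polynomial long division in the variable $a$, treating the $b_j$ as parameters; that is, to work in the ring $R[a]$ with $R=\mathbb{C}[b_j]$. Because the coefficient of $a^{\mathfrak{n}}$ in $\mathcal{N}$ equals $1$, $\mathcal{N}$ is monic in $a$ over $R$. Division by a monic polynomial never requires inverting an element of $R$. So we can write
\begin{equation*}
\mathcal{M}(a,b_j)=\mathcal{D}(a,b_j)\,\mathcal{N}(a,b_j)+\mathcal{R}(a,b_j),\qquad \deg_a\mathcal{R}<\mathfrak{n},
\end{equation*}
where $\mathcal{D}$ and $\mathcal{R}$ are polynomials in both $a$ and $b_j$. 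Setting $\mathcal{Q}=-\mathcal{D}$ then gives $\mathcal{M}+\mathcal{Q}\mathcal{N}=\mathcal{R}$, whose degree in $a$ is at most $\mathfrak{n}-1$, and hence at most $\mathfrak{n}$.

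The key step is proving the division statement, which I would do by induction on $\deg_a \mathcal{M}$, or more simply by an explicit iteration. Suppose the current remainder $\mathcal{R}_k$ has degree $d\ge \mathfrak{n}$ in $a$, with leading coefficient $c_d(b_j)\in R$. Replace it by
\begin{equation*}
\mathcal{R}_{k+1}=\mathcal{R}_k-c_d(b_j)\,a^{d-\mathfrak{n}}\,\mathcal{N}.
\end{equation*}
Since $\mathcal{N}=a^{\mathfrak{n}}+(\text{lower powers of }a\text{ with coefficients in }R)$, the $a^d$ terms cancel exactly and $\mathcal{R}_{k+1}$ has degree at most $d-1$. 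Each step strictly lowers the degree, so at most $\mathfrak{m}-\mathfrak{n}+1$ steps bring the degree below $\mathfrak{n}$. The accumulated $\mathcal{Q}=-\sum_k c_{d_k}(b_j)\,a^{d_k-\mathfrak{n}}$ is manifestly a polynomial in $a$ and $b_j$. If one only needs degree $\le\mathfrak{n}$, as the statement requires, one can stop after $\mathfrak{m}-\mathfrak{n}$ steps.

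The only point needing care is the monic hypothesis. If the leading coefficient of $\mathcal{N}$ in $a$ were a nonconstant function of $b_j$, division would produce rational functions of $b_j$, and $\mathcal{Q}$ would not be a differential operator. Normalizing that coefficient to $1$ is exactly what removes this issue. More generally it would suffice for the coefficient to be a nonzero constant, which is the setting of Theorem~\ref{theo1}, where $a_{\mathfrak{n}}(0)\neq0$ is assumed independent of $b_j$. I would note this in the write-up. Beyond that, the argument is routine.
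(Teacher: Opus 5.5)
Your proof is correct and is essentially the paper's: the paper runs exactly this iteration, subtracting $\mathcal{P}_i(b_j)\,a^{\mathfrak{m}_i-\mathfrak{n}}\mathcal{N}$ to kill the current leading $a$-power. It arrives at $\mathcal{Q}=-\sum_i\mathcal{P}_i(b_j)\,a^{\mathfrak{m}_i-\mathfrak{n}}$ and stops once the degree is $\le\mathfrak{n}$.

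Your optional extra step to reach degree $<\mathfrak{n}$ matches the paper's later remark about adding $\mathcal{F}(b_j)$ to $\mathcal{Q}$. Your aside slightly overstates Theorem~\ref{theo1}, which only assumes the $a$-degree is independent of $b_j$, not that the leading coefficient is constant, but this plays no role in the argument.
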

\begin{proof}
By isolating the terms of highest degree in $a$ from the two polynomials, we can write  
\begin{equation}
\begin{split}
\mathcal{N}(a,b_j) &= a^{\mathfrak{n}} + \mathcal{R}(a,b_j)\,, \\
\mathcal{M}(a,b_j) &= \mathcal{P}_0(b_j)a^{\mathfrak{m}} + \mathcal{R}_0(a,b_j)\, ,
\end{split}
\end{equation}
where the remainder terms $\mathcal{R}$ and $\mathcal{R}_0$ are polynomials whose degrees in $a$ do not exceed $\mathfrak{n}{-}1$ and $\mathfrak{m}{-}1$, respectively.  
We now define
\begin{equation}
\mathcal{M}_1(a,b_j) = \mathcal{M}(a,b_j) - \mathcal{P}_0(b_j)a^{\mathfrak{m}-\mathfrak{n}}\mathcal{N}(a,b_j)
= \mathcal{R}_0(a,b_j) - \mathcal{P}_0(b_j)a^{\mathfrak{m}-\mathfrak{n}}\mathcal{R}(a,b_j)\, .
\end{equation}
This new polynomial $\mathcal{M}_1$ has degree $\mathfrak{m}_1 \leq \mathfrak{m}{-}1$.  
If $\mathfrak{m}_1 \leq \mathfrak{n}$, the reduction is complete, and $\mathcal{Q}(a,b_j) = - \mathcal{P}_0(b_j)a^{\mathfrak{m}-\mathfrak{n}}$. Otherwise, we again isolate the highest-order term:
\begin{equation}
\mathcal{M}_1(a,b_j) = \mathcal{P}_1(b_j)a^{\mathfrak{m}_1} + \mathcal{R}_1(a,b_j)\, ,
\end{equation}
where $\mathcal{R}_1$ has $a-$degree at most $\mathfrak{m}_1{-}1$.  
We then define
\begin{equation}
\mathcal{M}_2(a,b_j) = \mathcal{M}_1(a,b_j) - \mathcal{P}_1(b_j)a^{\mathfrak{m}_1-\mathfrak{n}}\mathcal{N}(a,b_j)
= \mathcal{R}_1(a,b_j) - \mathcal{P}_1(b_j)a^{\mathfrak{m}_1-\mathfrak{n}}\mathcal{R}(a,b_j)\, .
\end{equation}
This iterative process is repeated until the $a-$degree of the resulting polynomial $\mathcal{M}_i$ reaches $\mathfrak{n}$ or falls below it.  
The final expression for the desired polynomial is therefore
$\mathcal{Q}(a,b_j) = -\sum_i \mathcal{P}_i(b_j)a^{\mathfrak{m}_i-\mathfrak{n}}$.
\end{proof}
Also this theorem admits several generalizations, e.g. to the case where $\mathcal{N}$ and $\mathcal{M}$ are matrices. 

In appendix \ref{AppBBB}, we briefly discuss the non-linear case.

A natural follow-up question concerns the uniqueness of the order-reduced equation. 
In general, it is not unique. 
Indeed, if $\mathcal{Q}(a,b_j)$ satisfies the assumptions of Theorem~\ref{theo2}, 
then so does $\tilde{\mathcal{Q}}(a,b_j)=\mathcal{Q}(a,b_j)+\mathcal{F}(b_j)$. 
This ambiguity is physically immaterial, as any two order-reduced equations derived from the same effective field theory 
agree with the underlying EFT up to the truncation error, and are therefore 
equivalent to each other. 
In particular, adding a polynomial $\mathcal{F}(b_j)$ to $\mathcal{Q}$ corresponds, within the EFT 
accuracy, to multiplying both sides of \eqref{fieldiamlo33} by 
$1+\lambda \mathcal{F}(b_j)+\mathcal{O}(\lambda^2)$. In practice, however, it is convenient to fix $\mathcal{F}$ so that $\mathcal{M}{+}\Tilde{\mathcal{Q}}\mathcal{N}$ has degree \textit{strictly less} than $\mathfrak{n}$ (by taking $-\mathcal{F}$ to be the coefficient of $a^\mathfrak{n}$ in $\mathcal{M}{+}\mathcal{Q}\mathcal{N}$), thereby making the initialization prescriptions for $(\partial_t^{\mathfrak{n}}\phi)_{t=0},...,(\partial_t^{\mathfrak{m}-1}\phi)_{t=0}$ algebraic.

\vspace{-0.3cm}
\subsection{Consistency of the order reduction}\label{sec:nosurv}
\vspace{-0.2cm}

The claim that the original and the order-reduced equations are ``equivalent at the given order in $\lambda$'' is purely formal, as it implicitly assumes that all physical observables depend smoothly on $\lambda$, and can be Taylor-expanded. Let us see how such an equivalence statement can be made rigorous in the case of linear equations in a constant background.
\begin{theorem}
Let $\phi:\textup{``Minkowski''}\rightarrow \mathbb{C}$ be a classical scalar field, and consider two alternative differential equations:
\begin{equation}\label{lineargeneralIII}
\begin{split}
&\mathcal{N}(\partial_t,\partial_j)\phi +\lambda \, \mathcal{M}(\partial_t,\partial_j)\phi=0\, ,\\
&\mathcal{N}(\partial_t,\partial_j)\phi 
+ \lambda\, [\mathcal{M}(\partial_t,\partial_j)
{+} \mathcal{Q}(\partial_t,\partial_j)\mathcal{N}(\partial_t,\partial_j)]\phi =0\, ,\\
\end{split}
\end{equation}
where $\mathcal{N}(a,b_j)$, $\mathcal{M}(a,b_j)$, and $\mathcal{Q}(a,b_j)$ are polynomials in $a$ and $b_j$ with constant coefficients. Suppose that all the roots $a$ of $\mathcal{N}(a,b_j)$ have multiplicity 1 for almost all $b_j$. Then, the slow dispersion relations of the two theories in \eqref{lineargeneralIII} agree to first order in $\lambda$.
\end{theorem}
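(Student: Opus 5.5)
Fix $k_j$ and write the two dispersion polynomials as
\[
\mathcal{P}_\lambda(\omega)=\mathcal{N}+\lambda\mathcal{M},\qquad
\tilde{\mathcal{P}}_\lambda(\omega)=\mathcal{N}+\lambda(\mathcal{M}+\mathcal{Q}\mathcal{N}),
\]
both evaluated at $(-i\omega,ik_j)$. The key observation is the factorization
\[
\tilde{\mathcal{P}}_\lambda=(1+\lambda\mathcal{Q})\mathcal{N}+\lambda\mathcal{M}=(1+\lambda\mathcal{Q})\,\mathcal{P}_\lambda-\lambda^2\mathcal{Q}\mathcal{M}.
\]
So, up to the nonvanishing prefactor $(1+\lambda\mathcal{Q})$, the two polynomials differ only at order $\lambda^2$. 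The plan is to turn this into a statement about slow roots via perturbation of simple roots.

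First step: define the slow roots. Restrict to those $k_j$ (a set of full measure, by hypothesis) for which every root $\omega_0$ of $\mathcal{N}(-i\omega,ik_j)$ is simple. By the argument in the proof of Theorem \ref{theo1}, each of the two theories has exactly $\mathfrak{n}$ roots that stay bounded as $\lambda\to0$, and these converge to the $\mathfrak{n}$ roots of $\mathcal{N}$. Because each $\omega_0$ is simple, $\partial_\omega\mathcal{N}(\omega_0)\neq0$. The implicit function theorem, applied to $F(\omega,\lambda)=\mathcal{P}_\lambda(\omega)$, then gives a unique smooth branch $\omega(\lambda)$ with $\omega(0)=\omega_0$. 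It does the same for $\tilde F=\tilde{\mathcal{P}}_\lambda$, giving $\tilde\omega(\lambda)$. By continuity of roots, these branches exhaust the slow roots.

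Second step: compare the first-order coefficients. Differentiating $F(\omega(\lambda),\lambda)=0$ at $\lambda=0$ gives
\[
\omega'(0)=-\frac{\mathcal{M}(\omega_0)}{\partial_\omega\mathcal{N}(\omega_0)}.
\]
For the second theory the same computation gives
\[
\tilde\omega'(0)=-\frac{\mathcal{M}(\omega_0)+\mathcal{Q}(\omega_0)\mathcal{N}(\omega_0)}{\partial_\omega\mathcal{N}(\omega_0)},
\]
and the extra term vanishes because $\mathcal{N}(\omega_0)=0$. Hence $\omega(\lambda)-\tilde\omega(\lambda)=\mathcal{O}(\lambda^2)$ for each slow branch. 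One can equally read this off the factorization: near $\omega_0$, $\tilde{\mathcal{P}}_\lambda$ equals a nonzero multiple of $\mathcal{P}_\lambda$ plus an $\mathcal{O}(\lambda^2)$ perturbation, so its root moves only at order $\lambda^2$.

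The main obstacle I expect is the bookkeeping rather than the computation. One point is the matching of slow roots between the two theories, which is handled by labeling both sets by the common limit $\omega_0$. The other is justifying why the simple-root hypothesis is needed and only generically available. At a double root of $\mathcal{N}$ the branches are Puiseux series in $\lambda^{1/2}$, and the clean first-order statement fails, which is why the theorem excludes such $k_j$. I would also note the degenerate case where the leading $\omega$-coefficient of $\mathcal{N}$ could vanish at special $k_j$. This is excluded by the standing assumption, inherited from Theorem \ref{theo1}, that the degree in $a$ is independent of $b_j$.
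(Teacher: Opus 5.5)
Your proposal is correct and follows essentially the same route as the paper: apply the (holomorphic) implicit function theorem at each simple root $\omega_0$ of $\mathcal{N}$, note that the two first-order coefficients differ only by the term $\mathcal{Q}(\omega_0)\mathcal{N}(\omega_0)/\partial_\omega\mathcal{N}(\omega_0)$, which vanishes because $\mathcal{N}(\omega_0)=0$, and use the counting argument of Theorem~\ref{theo1} to see that these branches exhaust the slow modes. Your factorization $\tilde{\mathcal{P}}_\lambda=(1+\lambda\mathcal{Q})\mathcal{P}_\lambda-\lambda^2\mathcal{Q}\mathcal{M}$ is a useful extra that makes the $\mathcal{O}(\lambda^2)$ agreement transparent, but the argument does not need it.
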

\begin{proof}
If we set $\phi\propto e^{ik_j x^j-i\omega t}$, the first line of \eqref{lineargeneralIII} reduces to 
\begin{equation}\label{dispuzio}
\mathcal{N}(-i\omega) +\lambda \, \mathcal{M}(-i\omega)=0\, .
\end{equation}
Here, we have suppressed the dependence on $k_j$ by fixing its value. Such value has been chosen so that $\mathcal{N}(-i\omega)$ possesses only simple roots (the exceptional points where this condition fails are handled automatically by continuity). Now, we can view equation \eqref{dispuzio} as an implicit function $F(\omega,\lambda)=0$. Hence, let $\omega_0$ be a root of $\mathcal{N}(-i\omega)$, and observe the following facts: (a) $F(\omega_0,0)=0$, (b) $\partial_\omega F(\omega_0,0)=-i\mathcal{N}'(-i\omega_0)\neq 0$ (since $\omega_0$ is a simple root), and (c) $F(\omega,\lambda)$ is an entire function. Then, the holomorphic implicit function theorem applies, and there exists a unique local solution $\omega(\lambda)$ of \eqref{dispuzio} such that $\omega(0)=\omega_0$. Such a solution is analytic in a neighborhood of $\lambda=0$, and thus admits an expansion of the form $\omega(\lambda)=\omega_0+\lambda\omega_1+\mathcal{O}(\lambda^2)$, where $\omega_1$ is determined by the implicit function theorem to be
\begin{equation}\label{sguiz}
\omega_1 = -\dfrac{\partial_\lambda F(\omega_0,0)}{\partial_\omega F(\omega_0,0)}=\dfrac{\mathcal{M}(-i\omega_0)}{i\mathcal{N}'(-i\omega_0)}\, .
\end{equation}
If we carry out the same calculation starting from the second line of \eqref{lineargeneralIII}, we arrive at an identical result, except that in \eqref{sguiz} one would replace $\mathcal{M}(-i\omega_0)$ with $\mathcal{M}(-i\omega_0)+\mathcal{Q}(-i\omega_0)\mathcal{N}(-i\omega_0)$. However, since $-i\omega_0$ is a root of $\mathcal{N}$, the additional term $\mathcal{Q}(-i\omega_0)\mathcal{N}(-i\omega_0)$ vanishes, so the resulting values of $\omega_1$ are identical. 

In summary, we have shown that, for every root of $\mathcal{N}$, there exists a corresponding slow mode whose first-order expansion in $\lambda$ yields the same result for both equations in \eqref{lineargeneralIII}. But since $\mathcal{N}$ has only simple roots, the number of such slow modes equals the degree of $\mathcal{N}$, and Theorem \ref{theo1} reassures us that all the remaining modes are fast.
\end{proof}
This theorem allows for generalizations analogous to those discussed previously. The only substantive restriction here is the assumption that $\mathcal{N}$ possesses only simple roots. When this condition fails, the Taylor expansion must be replaced by a Puiseux expansion. In that case, one finds that the leading order $\lambda^{1/r}$  of that expansion remains identical for the two theories, in all branches. Also, note that a similar reasoning applies at higher orders in $\lambda$, since the holomorphic implicit function theorem can determine all derivatives with respect to $\lambda$ at $0$.

\vspace{-0.3cm}
\section{Lorentz and general covariance}\label{sec:Lorentz}
\vspace{-0.2cm}

Our proposal for how to eliminate unphysical degrees of freedom makes use of the reduction of order procedure. This procedure is not used to modify equations of motion, but only to set physically meaningful initial data. However, this still involves breaking formal
Lorentz invariance (or general covariance). Nevertheless, we will argue in this section that, if one restricts to Lorentz frames (or coordinate systems) in which EFT is manifestly valid, then the effects of this breaking are of the same size as the effects of EFT terms which are neglected in the original equation of motion. Hence, the proposal does not introduce any \textit{unphysical} breaking of Lorentz invariance.

\vspace{-0.4cm}
\subsection{Preservation of Lorentz covariance}
\vspace{-0.3cm}

\label{subsec:Lorentz}

Let's start by considering a Lorentz covariant theory. The key assumption underlying any EFT is that the characteristic gradient length scale \( L \) is much larger than the microscopic UV cutoff length scale \( \lambda \). Assume that we have a well-posed and Lorentz invariant system of equations describing an EFT truncated at some order \( \mathfrak{m} \), i.e., the neglected higher order terms in the equation of motion scale as \( (\lambda / L)^{\mathfrak{m}+1} \). Consider an inertial observer Alice for whom the validity of EFT assumption holds, at least initially. She can perform the reduction of order procedure to determine initial data for higher derivatives of fields at $t=0$, modulo EFT error terms of order \( (\lambda / L)^{\mathfrak{m}+1} \). She can then neglect these error terms to fix the initial data for higher derivatives exactly. Well-posedness then guarantees the existence of a unique solution of the equations of motion. 

Now consider this from the perspective of another inertial observer, Bob. If Bob moves at a moderate speed relative to Alice (say, \( v \sim 0.5 \)), the typical gradient length scale remains approximately \( L \), so the error in his frame also scales as \( (\lambda / L)^{\mathfrak{m}+1} \). Hence, the Lorentz transformation of Alice's solution will satisfy the validity condition of the EFT also in Bob's frame. Bob can then apply reduction of order in his frame to specify initial data for higher derivatives of fields at time $t'=0$. This procedure uses only the EFT equation of motion, which (the Lorentz transformation of) Alice's solution satisfies in Bob's frame. Therefore the initial data for higher derivatives imposed by order reduction in Bob's frame may differ from the exact initial data arising from (the Lorentz transformation of) Alice's solution only by an EFT error term of order \( (\lambda / L)^{\mathfrak{m}+1} \). Finally we may invoke the ``continuous dependence'' (Cauchy stability) aspect of well-posedness to deduce that, at least in a compact region of spacetime, the difference between the solution arising from Bob's initial data and (the Lorentz transformation of) Alice's solution is of order \( (\lambda / L)^{\mathfrak{m}+1} \). This is the same as the expected size of the deviation arising from the higher order EFT terms that were neglected in the original equation of motion. Notice how this argument makes use of both the Lorentz covariance and the well-posedness of the equation of motion. 

Consider now the case where Bob moves with ultrarelativistic speed relative to Alice ($v \to 1$). 
In this regime, Lorentz contraction becomes significant, and the characteristic gradient length scale in Bob's frame is
\(L' = L/\gamma\).
If $\gamma$ is sufficiently large, one may have $L' \lesssim \lambda$, so that Bob can no longer justify the applicability of the effective field theory on the basis of a local gradient expansion.
Nevertheless, suppose that Bob is informed by Alice that, in her rest frame, the EFT assumptions are satisfied. 
In this case, Bob may still consistently rely on the Lorentz-invariant effective theory, since its solutions coincide with the Lorentz transformation of Alice’s solutions, which are legitimately obtained within the EFT regime\footnote{An analogous situation arises in particle physics. Any given particle can have arbitrarily large energy in the rest frame of a suitably boosted observer. Nevertheless, if all particles involved in an interaction have energies well below the UV cutoff in the center-of-momentum frame, the process is consistently described by a low-energy effective theory formulated in that frame. An observer moving ultrarelativistically with respect to the center-of-momentum frame cannot justify the use of the EFT via a local energy expansion in their own frame. However, if the EFT is Lorentz invariant, they may still use it operationally, since this is equivalent to performing the computation in the center-of-momentum frame (where the EFT holds) and subsequently boosting the result.}.
Crucially, however, Bob is not allowed to perform Lorentz-violating operations such as order reductions, because his use of the effective theory is not grounded in a controlled expansion in gradients (which are large in his frame), but rather in the existence of another observer (Alice) for whom the EFT is manifestly valid. 
The only consistent procedure available to Bob for specifying initial data for the Lorentz-invariant EFT is to take Alice’s solution and extract from it the values of the fields and their derivatives on Bob’s simultaneity surface.

The criterion that separates observers who are allowed to perform order reduction from those who are not is
$\gamma \lambda/L\ll 1$. 
When this condition is violated, Bob has no justification for using order reduction to set up initial data.
As a quantitative illustration, suppose that $\lambda/L \sim 10^{-3}$ in Alice’s frame, and that the gradient expansion ceases to be reliable when $\gamma \lambda/L \sim 10^{-1}$. 
It then follows that Bob may perform order reduction only if $|v| \lesssim 0.99995$.

\vspace{-0.4cm}
\subsection{Preservation of general covariance}\label{genercovuz} 
\vspace{-0.3cm}

We will now sketch the generalization of this argument to gravitational theories. For simplicity we restrict to vacuum gravity, i.e., no matter fields (the problem of BDNK in curved spacetime is outlined at the end of Sec. \ref{bdnksection}). We will argue that the order reduction procedure preserves general covariance modulo the truncation error of the EFTs (see section \ref{sec:EGB} for more details on the order reduction in gravitational EFTs). In a gravitational EFT with a UV length scale $\lambda$, the condition for validity of EFT is that the metric is {\it slowly varying}, meaning that there exists a coordinate system in which $j$th derivatives of metric components are of order $L^{-j}$ with $L \gg \lambda$. 

The initial data for an EFT of vacuum gravity in $\mathcal{D}$ spacetime dimensions, truncated at some order $\mathfrak{m}+2$ in derivatives, is of the form $(\Sigma,\gamma,K, \pounds_n K, \ldots, \pounds_n^\mathfrak{m} K)$ where $\Sigma$ is a manifold of dimension $d=\mathcal{D}-1$, $\gamma$ a Riemannian metric on $\Sigma$, and $\pounds_n^j K$ ($0 \le j \le \mathfrak{m})$ are symmetric tensors on $\Sigma$. This data should satisfy the constraint equations of the EFT.
We assume existence of a well-posed, and generally covariant, formulation of the equations of this EFT, for example given by the approach of FHK \cite{Figueras:2024bba}. This means the following: (a) there exists a spacetime $(M,g)$ satisfying the EFT equation of motion and for which $\Sigma$ can be identified with a hypersurface in $M$ with induced metric $\gamma$, extrinsic curvature $K$, and the $j$th Lie derivative of the extrinsic curvature along the unit normal $n^a$ of $\Sigma$ are $\pounds_n^j K$, $1 \le j \le \mathfrak{m}$; (b) the solution $(M,g)$ is determined uniquely up to diffeomorphisms\footnote{This should be understood modulo the freedom to change $M$. It might be possible to fix $M$ uniquely by introducing the notion of a maximal globally hyperbolic development but existence of such a development has not yet been studied in the FHK formalism.} that reduce to the identity on $\Sigma$; (c) the solution depends continuously on the initial data in the sense that gauge equivalence classes of solutions depend continuously on gauge equivalence classes of initial data \cite{Hawking1973}.

We assume that the error introduced in truncating the EFT at order $\mathfrak{m}+2$ is of order\footnote{
More precisely, it is of order $\lambda^{\mathfrak{m}+1}/L^{\mathfrak{m}+3}$ since Einstein's equation has dimensions $L^{-2}$, but we won't include powers of $L$ now.} $\lambda^{\mathfrak{m}+1}$. Now assume we are given a triple $(\Sigma_A,\gamma_A,K_A)$ with $(\gamma_A,K_A)$ slowly varying in some coordinate chart on $\Sigma_A$. We can apply the order reduction procedure to determine $\pounds_n^j K$, $1 \le j \le k$ in terms of $(\gamma_A,K_A)$ modulo error terms of order $\lambda^{\mathfrak{m}+1}$ which we neglect. The EFT constraint equations then reduce to constraint equations for $(\gamma_A,K_A)$ (involving higher spatial derivatives) which we assume can be solved (see Sec. \ref{sec:EGB} for further discussion). We then appeal to (a) to obtain a solution $(M_A,g_A)$ of our EFT equation of motion. We assume this is slowly varying, if it is not then we shrink $M_A$ until it is. Now apply a diffeomorphism $\varphi$ which maps this to $(\varphi(M_A),\varphi_\star g_A)$. This is also a solution of our truncated EFT because the equations of motion are generally covariant. We assume that the diffeomorphism is slowly varying, i.e., there exist coordinates such that $(\varphi(M_A),\varphi_\star g_A)$ satisfies the EFT assumptions. Pick a spacelike hypersurface $\Sigma_B$ in the latter spacetime. Let's denote its initial data as $(\Sigma_B,\gamma_B,K_B, \pounds_n^j K_B)$. (This is analogous to the data of the Lorentz transformation of Alice's solution in section \ref{subsec:Lorentz}.) We now apply the reduction of order procedure on this surface. 
Starting from $(\Sigma_B,\gamma_B,K_B)$ this determines data for the higher order derivatives $\overline{ \pounds_n^j K_B}$ ($1 \le j \le \mathfrak{m}$), where we use an overbar to indicate that these may differ from $\pounds_n^j K_B$ by error terms of order $\lambda^{\mathfrak{m}+1}$. Since $(\varphi(M_A),\varphi_\star g_A)$ is a solution, it satisfies the constraint equations, and hence $(\Sigma_B,\gamma_B,K_B,\overline{ \pounds_n^j K_B})$ satisfies the constraints modulo an error of order $\lambda^{\mathfrak{m}+1}$. We assume that this can be improved to an exact solution of the constraints by an adjustment to $(\gamma_B,K_B)$ of order $\lambda^{\mathfrak{m}+1}$, retaining the same expressions for $\overline{ \pounds_n^j K_B}$ in terms of $(\gamma_B,K_B)$. From (a) the data $(\Sigma_B,\gamma_B,K_B,\overline{ \pounds_n^j K_B})$ determines a solution $(M_B,g_B)$ with $\Sigma_B \subset M_B$. This initial data differs from $(\Sigma_B,\gamma_B,K_B,\pounds_n^j K_B)$ only by terms of order $\lambda^{\mathfrak{m}+1}$ hence by (c) we deduce that $g_B$ is gauge equivalent to $\varphi_\star g_A$ modulo terms of order $\lambda^{\mathfrak{m}+1}$.

\section{Applications}

We conclude this article by discussing some concrete applications. Specifically, we will examine a few non-linear causal, stable, and well-posed effective field theories of interest that possess extra degrees of freedom, and we will outline the appropriate initialization prescriptions for each.

\subsection{Causal heat conduction in rotating stars}

The standard theory for heat transport in rigid bodies reads \cite{Schaab:1998ed,Negreiros2012,Potekhin:2015qsa,Beznogov:2022wae}
\begin{equation}\label{CuT}
C u^\mu \nabla_\mu T_R=\lambda \nabla_\mu [\kappa (g^{\mu \nu}{+}u^\mu u^\nu)\nabla_\nu T_R]+\mathcal{O}(\lambda^2) \, ,   
\end{equation}
where $C$ is the specific heat, $u^\mu$ is the flow velocity, $T_R$ is the redshifted temperature, and $\kappa$ is the heat conductivity. The coefficient $\lambda$ serves as a bookkeeping parameter, used to track the order of each term in the mean free path expansion. Unfortunately, equation \eqref{CuT} is acausal and becomes unstable and ill-posed in rotating systems \cite{Hiscock_Insatibility_first_order,Kost2000,GavassinoSuperlum2021}. Therefore, it was proposed in \cite{GavassinoAntonelli:2025umq} to replace \eqref{CuT} with a BDNK-type alternative:
\begin{equation}\label{CuTcausal}
C u^\mu \nabla_\mu T_R=\lambda \nabla_\mu (\kappa \nabla^\mu T_R)+\mathcal{O}(\lambda^2)  \, .  
\end{equation}
Let us see how this equation should be initialised.

Having neutron-star applications in mind, we consider a stationary and axially symmetric background spacetime, and use quasi-isotropic spherical coordinates
$\{t,\phi,r,\theta\}$ \cite{Beznogov:2022wae}. In these coordinates, the components $g_{\mu\nu}(r,\theta)$ of the metric and the components $g^{\mu \nu}(r,\theta)$ of the inverse metric read
\begin{equation}
g_{\mu \nu}=
\begin{bmatrix}
\rho^2\omega^2{-}e^{2\Phi} & -\omega \rho^2 & 0 & 0 \\
-\omega \rho^2 & \rho^2 & 0 & 0 \\
0 & 0 & e^{2\Psi} & 0 \\
0 & 0 & 0 & e^{2\Psi} r^2 \\
\end{bmatrix} \, , \spc
g^{\mu \nu}=
\begin{bmatrix}
-e^{-2\Phi} & -\omega e^{-2\Phi} & 0 & 0 \\
-\omega e^{-2\Phi} & \rho^{-2}{-}\omega^2 e^{-2\Phi} & 0 & 0 \\
0 & 0 & e^{-2\Psi} & 0 \\
0 & 0 & 0 & e^{-2\Psi} r^{-2} \\
\end{bmatrix}\, ,
\end{equation}
where
$\{\Phi,\Psi,\rho,\omega\}$ are some metric functions. In these coordinates, a rigidly rotating star has collective flow velocity $u^\mu \partial_\mu =e^{-\Phi}\gamma(\partial_t \,{+}\,\Omega\partial_\phi)$, with $\Omega=\text{const}$ the rotation frequency and
\begin{equation}
\gamma=\dfrac{1}{\sqrt{1-e^{-2\Phi}\rho^2 (\Omega{-}\omega)^2 }}
\end{equation}
the Lorentz factor of the medium relative to a Zero Angular Momentum Observer (ZAMO). Then, the redshifted temperature is just
$T_R=e^\Phi T/\gamma$, and equation \eqref{CuTcausal} explicitly reads
\begin{equation}
C e^{-\Phi}\gamma(\partial_t \,{+}\,\Omega\partial_\phi) T_R=\dfrac{\lambda}{\sqrt{-g}} \partial_\mu (\kappa \sqrt{-g}g^{\mu \nu}\partial_\nu T_R)+\mathcal{O}(\lambda^2) \, ,   
\end{equation}
with
$\sqrt{-g}=e^{\Phi+2\Psi}\rho r$. This equation is of second order in time, so we need to assign the profiles of both $T$ and $\partial_t T$ at $t=0$. To constrain the latter in terms of the former, we follow the same procedure as in section \ref{sec:rod}, with the only caveat that $\kappa=\kappa(r,\theta,T_R)$ depends on $T_R$ itself, and thus the replacement of the external derivative $\partial_t$ with $-\Omega\partial_\phi$ must be handled with extra care. Below, we report the final prescription for the full Cauchy problem truncated to order $\lambda$:
\begin{equation}\label{neutroncauchy}
\begin{cases}
(\partial_t \,{+}\,\Omega\partial_\phi)T_R= \dfrac{\lambda e^\Phi}{C\gamma \sqrt{-g}} \partial_\mu (\kappa\sqrt{-g} \,g^{\mu\nu}\partial_\nu T_R) \, , \\
T_R|_{t=0}=T_R^o\, ,\\
\partial_t T_R|_{t=0}=-\Omega\partial_\phi T_R^o+ \dfrac{\lambda e^\Phi}{C \gamma} \left[ \dfrac{1}{\sqrt{-g}} \partial_r \left(\kappa\sqrt{-g} \,g^{rr}\partial_r T_R^o\right)+\dfrac{1}{\sqrt{-g}} \partial_\theta \left(\kappa\sqrt{-g} \,g^{\theta\theta}\partial_\theta T_R^o\right)+\dfrac{1}{\rho^2 \gamma^2}\partial_\phi \left(\kappa \partial_\phi T_R^o\right)\right]\, .\\
\end{cases}
\end{equation}
Once again, we note that, if preserving causality and covariance in an exact fashion is not a concern, one may just elevate the third line of \eqref{neutroncauchy} to a full equation of motion (which happens to be itself well-posed):
\begin{equation}
(\partial_t+\Omega\partial_\phi) T_R= \dfrac{\lambda e^\Phi}{C \gamma} \left[ \dfrac{1}{\sqrt{-g}} \partial_r \left(\kappa\sqrt{-g} \,g^{rr}\partial_r T_R\right)+\dfrac{1}{\sqrt{-g}} \partial_\theta \left(\kappa\sqrt{-g} \,g^{\theta\theta}\partial_\theta T_R\right)+\dfrac{1}{\rho^2 \gamma^2}\partial_\phi \left(\kappa \partial_\phi T_R\right)\right]+\mathcal{O}(\lambda^2)\, .  
\end{equation}
This would correspond to a ``density-frame'' formulation of heat conduction in rotating stars \cite{Armas:2020mpr,Basar:2024qxd,GavassinoParabolic2025hwz}.

\subsection{BDNK}\label{bdnksection}
\vspace{-0.1cm}

It is finally time to discuss the proper initialization of BDNK (in Minkowski spacetime, for simplicity). Consider a relativistic viscous fluid at zero chemical potential, and introduce, as usual, a bookkeeping parameter $\lambda$ that keeps track of the order in the mean-free-path expansion. Then, the BDNK equations of motion for the inverse-temperature four-vector $\beta_\mu$ ($=$``flow velocity''/``temperature'') can be written in the form \cite{Clarisse:2025lli}
\begin{equation}\label{BDNKEOM}
T^{\mu \nu}=T^{\mu \nu}_{\text{Ideal}}+\lambda H^{\mu \nu \rho\sigma}\partial_{\rho}\beta_{\sigma}+\mathcal{O}(\lambda^2) \, ,
\end{equation}
where $T^{\mu \nu}_{\text{Ideal}}(\beta_\rho)$ is the stress-energy tensor of the ideal fluid, and $H^{\mu \nu \rho\sigma}=H^{(\mu \nu) (\rho\sigma)}(\beta_\theta)$ is the susceptibility of the energy-momentum tensor to spacetime gradient. Then, the conservation law $\partial_\nu T^{\mu\nu}=0$ explicitly reads
\begin{equation}
M^{\mu \nu \rho} \partial_\nu \beta_\rho +\lambda H^{\mu \nu \rho\sigma} \partial_\nu \partial_{\rho}\beta_{\sigma}+\lambda K^{\mu \nu\rho\sigma\theta} \partial_\nu \beta_\theta\, \partial_\rho \beta_\sigma  +\mathcal{O}(\lambda^2)=0\, ,
\end{equation}
with
\begin{equation}
M^{\mu \nu \rho}= \dfrac{\partial T_{\text{Ideal}}^{\mu \nu}}{\partial \beta_\rho} \, , \spc
K^{\mu \nu\rho\sigma\theta}= \dfrac{\partial H^{\mu \nu\rho\sigma}}{\partial \beta_\theta} \, . 
\end{equation}
Let us separate the time derivatives from the space derivatives:
\begin{equation}\label{thelong}
\begin{split}
& M^{\mu t \rho} \partial_t \beta_\rho+M^{\mu j \rho} \partial_j \beta_\rho \\
& +\lambda H^{\mu tt\sigma} \partial_t \partial_{t}\beta_{\sigma}+2\lambda H^{\mu (jt)\sigma} \partial_j \partial_{t}\beta_{\sigma}+\lambda H^{\mu jk\sigma} \partial_j \partial_{k}\beta_{\sigma}\\
&+\lambda K^{\mu tt\sigma\theta} \partial_t \beta_\theta\, \partial_t \beta_\sigma +\lambda K^{\mu tk\sigma\theta} \partial_t \beta_\theta\, \partial_k \beta_\sigma +\lambda K^{\mu jt\sigma\theta} \partial_j \beta_\theta\, \partial_t \beta_\sigma +\lambda K^{\mu jk\sigma\theta} \partial_j \beta_\theta\, \partial_k \beta_\sigma  +\mathcal{O}(\lambda^2)=0\, .\\
\end{split}
\end{equation}
To zeroth order in $\lambda$, we have the vectorial equation $M^{t } \partial_t \beta+M^{ j } \partial_j \beta=\mathcal{O}(\lambda)$, where we have defined the $4\times 4$ matrices $M^\nu=[M^{\mu \nu \rho}]$. Multiplying both sides by $(M^t)^{-1}$, we obtain the equation
\begin{equation}\label{dtbetone}
\partial_t \beta_\rho=L_\rho^{j \xi} \partial_j \beta_\xi +\mathcal{O}(\lambda) \, ,
\end{equation}
with $L^j=-(M^t)^{-1} M^j$. Our task, then, is to use the above equation to replace time derivatives with space derivatives in all terms with a prefactor $\lambda$, since the error that we make can be reabsorbed into $\mathcal{O}(\lambda^2)$. Thus, equation \eqref{thelong} becomes
\begin{equation}\label{theuglymonster}
\begin{split}
& M^{\mu t \rho}( \partial_t \beta_\rho-L_\rho^{j \xi} \partial_j \beta_\xi) \\
& +\lambda H^{\mu tt\sigma} \dfrac{\partial L_\sigma^{j \xi}}{\partial\beta_\alpha} L_\alpha^{k \zeta} \partial_k \beta_\zeta \partial_j \beta_\xi+\lambda H^{\mu tt\sigma} L_\sigma^{j \xi} \partial_j (L_\xi^{k \zeta} \partial_k \beta_\zeta)+2\lambda H^{\mu (jt)\sigma} \partial_j (L_\sigma^{k \xi} \partial_k \beta_\xi)+\lambda H^{\mu jk\sigma} \partial_j \partial_{k}\beta_{\sigma}\\
&+\lambda K^{\mu tt\sigma\theta} L_\theta^{j \xi} \partial_j \beta_\xi\, L_\sigma^{k \zeta} \partial_k \beta_\zeta +\lambda K^{\mu tk\sigma\theta} L_\theta^{j \xi} \partial_j \beta_\xi\, \partial_k \beta_\sigma +\lambda K^{\mu jt\sigma\theta} \partial_j \beta_\theta\, L_\sigma^{j \xi} \partial_j \beta_\xi +\lambda K^{\mu jk\sigma\theta} \partial_j \beta_\theta\, \partial_k \beta_\sigma  =\mathcal{O}(\lambda^2)\, .\\
\end{split}
\end{equation}
This is the equation that should be used to correctly initialize BDNK. Given its complexity, it is tempting to opt for the easier prescription \eqref{dtbetone} (as is done in \cite{Bea:2023rru}), which may be accurate enough for applications where the initial data is anyway not precisely known. However, to attain order-$\lambda$ precision (which is the precision goal of BDNK theory) for known physical data, it is necessary to employ \eqref{theuglymonster}.

The above initialization procedure extends straightforwardly to curved spacetimes. In this case, Christoffel-symbol corrections enter \eqref{thelong}, generating additional contributions in \eqref{theuglymonster}. A subtlety arises if the spacetime is dynamical, since the initial data must also satisfy the constraint equations $G^{\mu 0}=8\pi T^{\mu 0}$. The energy-momentum tensor \eqref{BDNKEOM} contains time derivatives $\partial_t \beta_\rho$, but these appear multiplied by $\lambda$. Therefore, consistently within the EFT accuracy, we can replace these time derivatives with either \eqref{theuglymonster} or just with \eqref{dtbetone} (supplemented with appropriate connection terms), since in both cases the error to $T^{\mu 0}$ is of second order.

The preservation of general covariance at first order in $\lambda$ follows the same logic as in Sec.~\ref{genercovuz}. In particular, adopting the same notation, the initial data for the Einstein-BDNK system is $(\Sigma,\gamma,K,T,u,\pounds_n T,\pounds_n u)$. The order-reduction procedure uniquely determines $\pounds_n T_A$ and $\pounds_n u_A$ in terms of a given configuration $(\Sigma_A,\gamma_A,K_A,T_A,u_A)$. If we now apply a slowly varying diffeomorphism to the corresponding solution, we obtain a new spacetime from which we can extract data $(\Sigma_B,\gamma_B,K_B,T_B,u_B,\pounds_n T_B,\pounds_n u_B)$. The values of $\pounds_n T_B$ and $\pounds_n u_B$ coincide with those prescribed by order reduction up to errors of order $\lambda^2$. By well-posedness, this implies that the associated solution is equivalent to the original one modulo $\mathcal{O}(\lambda^2)$ corrections.

\vspace{-0.2cm}
\subsection{Regularised Einstein-Gauss-Bonnet gravity}\label{sec:EGB}
\vspace{-0.1cm}

The order reduction scheme can also be used to fix initial data for the fast degrees of freedom in gravitational EFTs. There is, however, an added complication associated with the diffeomorphism-invariance of the action: some components of the equations of motion are initial value constraints (similar issues arise in gauge theories too, such as in electromagnetism). In this section, we explain how the order reduction procedure applies to such theories.

More specifically, consider a gravitational theory with a diffeomorphism-invariant Lagrangian that is locally constructed out of the metric tensor, the Riemman tensor and its covariant derivatives. Assume that the equation of motion of this theory,
\begin{equation}
    E^{\mu \nu}=0,
\end{equation}
is of order $\mathfrak{m}+2$ in derivatives of the metric. Cauchy data for such a theory consist of a Cauchy slice $\Sigma$ with unit normal $n$, the induced metric {$\gamma_{\mu \nu}=g_{\mu \nu}+n_\mu n_\nu$} on $\Sigma$, the extrinsic curvature $K_{\mu \nu}$ and its normal derivatives $(\pounds_n)^k K_{\mu \nu}$ up to order $k\leq \mathfrak{m}$. It follows from the diffeomorphism symmetry that the equations $E^{\mu \nu} n_\nu=0$ are initial value constraints that the Cauchy data must satisfy.\footnote{A simple way to see that $E^{\mu \nu} n_\nu=0$ are indeed constraints goes as follows. Diffeomorphism-invariance of the action implies the \textit{off-shell} Bianchi identity $\nabla_\nu E^{\mu\nu}=0$. Decomposing these equations to components parallel and orthogonal to $n$ shows that normal (time) derivatives of $E^{\mu \nu} n_\nu$ are related to spatial derivatives of the evolution equations (i.e. the orthogonal projections of $E_{\mu\nu}$). This means that the $E^{\mu \nu} n_\nu$ components contain $1$ fewer time derivatives of the metric than $E^{\mu \nu} \gamma^\alpha_\mu \gamma^\beta_\nu$ and hence are initial data constraints. An alternative argument may be based on the covariant phase space formulation \cite{Wald:1990mme,Lee:1990nz,Iyer:1994ys}.}

In the examples we previously discussed, we could choose the slow degrees of freedom {\it freely} and then apply the order reduction scheme to fix data for the extra degrees of freedom in terms of data for the slow degrees of freedom. However, for theories with initial data constraints, the slow degrees of freedom ($\gamma_{\mu \nu}$, $K_{\mu \nu}$ in gravity) cannot be initialized completely freely.
Nevertheless, order reduction can be used on the spatial projection of the equations of motion $\gamma_\mu^\rho \gamma^\sigma_\nu E_{\rho\sigma}=0$ to fix the data for $(\pounds_n)^k K_{\mu \nu}$ (with $k\geq 1$) in terms of the data for $\gamma_{\mu \nu}$, $K_{\mu \nu}$. More precisely, one can derive expressions of the form
\begin{equation}
    (\pounds_n)^k K_{\mu \nu}=\kappa^{(k)}_{\mu \nu}(\gamma, R_\gamma, D R_\gamma, \ldots, K, DK, \ldots),
\end{equation}
where $D_\mu$ is the covariant derivative associated with the induced metric $\gamma$ and $R_\gamma$ is the Riemann curvature associated with $\gamma$. Then, substituting these expressions into $E^{\mu \nu} n_\nu=0$ gives constraint equations for $\gamma_{\mu \nu}$, $K_{\mu \nu}$. We shall refer to the resulting constraint equations as order-reduced constraints. Of course, the order-reduced constraints will contain higher than second-order spatial derivatives of $\gamma_{\mu \nu}$, $K_{\mu \nu}$. The character of these equations is unclear. We leave to future work the problem of developing methods for solving these equations. 

Even though finding exact solutions to the order-reduced constraint equations of general gravitational EFTs seems very challenging, there is a straightforward way to find approximate (perturbative) solutions that solve the constraints modulo an error term consistent with the expected accuracy of the EFT. Consider a gravitational EFT that is accurate up to an error term of order $\lambda^{\mathfrak{m}+1}$, where $\lambda$ is a fundamental length scale associated with UV physics. We can generate a perturbative solution,
\begin{equation}
    \gamma_{\mu \nu}=\sum\limits_{k=0}^{\mathfrak{m}}\lambda^k\gamma^{(k)}_{\mu \nu},\qquad \qquad K_{\mu \nu}=\sum\limits_{k=0}^{\mathfrak{m}}\lambda^k K^{(k)}_{\mu \nu}.
\end{equation}
to the order-reduced EFT constraints up to an error of order $\lambda^{\mathfrak{m}+1}$. When this data is evolved forward in time using the evolution equations, our failure to solve the constraints on the initial data surface will introduce a secularly growing error to the constraints at later times (which follows from the equations governing the evolution of constraint violations). This problem may be alleviated by using a trick well-known in numerical relativity: by adding suitable constraint-damping terms (that vanish on-shell) to the evolution equations, one can make sure that the constraint violations satisfy damped wave equations, guaranteeing the decay of constraint violations at late times (see e.g. \cite{Gundlach:2005eh,Weyhausen:2011cg,Alic:2011gg}). This may be a suitable solution for our purposes, since solving the non-linear EFT evolution equations (that are themselves accurate only up to order $\lambda^{\mathfrak{m}+1}$) is already a source of a secularly growing error of order\footnote{
This assumes only that the truncation error of the EFT equations is bounded in time. If the truncation error actually decays (e.g., due to dispersion) then the secular growth may be slower, or absent.} $t\,\lambda^\mathfrak{m}$ \cite{Reall:2021ebq,Figueras:2025wtx}. In other words, even if we had an exact solution to the initial value constraints, the best we can do (generically) is to solve the EFT equations of motion up to an error of order $t\,\lambda^{\mathfrak{m}+1}$.

We will now illustrate these ideas for Einstein-Gauss-Bonnet (EGB) gravity in $\mathcal{D}>4$ spacetime dimensions,
\begin{equation}
    S=\frac{1}{16\pi G}\int d^\mathcal{D}x\sqrt{-g}\left[R+\alpha\, \lambda^2\,\mathcal{G}+\mathcal{O}(\lambda^4)\right]\,, \label{eq:EGB_action}
\end{equation}
where $\alpha$ is a dimensionless coupling constant, $\lambda$ a fundamental length scale, and
\begin{equation}
    \mathcal{G}=R^2 -4\,R_{\mu \nu}\,R^{\mu \nu}+R_{\mu \nu \rho \sigma}\,R^{\mu \nu \rho \sigma}\,.
\end{equation}
The equation of motion of \eqref{eq:EGB_action} is second order so setting up initial data does not require any order reduction. Moreover, at weak coupling (i.e. within the regime of validity of EFT), this theory admits a well-posed initial value formulation (in a modified harmonic gauge) \cite{Kovacs:2020ywu} (see also \cite{AresteSalo:2023mmd}). Nevertheless, we shall instead use this theory to illustrate the FHK procedure \cite{Figueras:2024bba}, which leads to a theory with equations of motion that are not second order, but still admit a well-posed initial value problem. We start by performing a perturbative field redefinition,
\begin{equation}
    g_{\mu \nu}\to g_{\mu \nu}+\beta\, \lambda^2\, R_{\mu \nu}+\mathcal{O}(\lambda^4),
\end{equation}
which produces the regularized EGB action
\begin{equation}
    S=\frac{1}{16\pi G}\int d^\mathcal{D} x\sqrt{-g}\left[R+ \lambda^2\,\left(\alpha\,\mathcal{G}+\beta\, R^{\mu \nu} G_{\mu \nu}\right)+\mathcal{O}(\lambda^4)\right]\,. \label{eq:reg_EGB_action}
\end{equation}
The resulting equations of motion are given by
\begin{align}
    E^\mu{}_\nu\equiv&\,G^\mu{}_{\nu} + \alpha\,\lambda^2\,\delta_{\mu \rho_1 \rho_2 \rho_3 \rho_4}^{\nu \sigma_1 \sigma_2 \sigma_3 \sigma_4} R_{\sigma_1\sigma_2}{}^{\rho_1 \rho_2}R_{\sigma_3\sigma_4}{}^{\rho_3 \rho_4}
     \nonumber\\
     &-\beta\,\lambda^2\, \left(\Box G^\mu{}_\nu-2\,G^\rho{}_\sigma R^{\mu \sigma}{}_{\nu \rho}-\frac12 \delta^\mu_\nu \, G^\sigma{}_\rho R^\rho{}_\sigma \right)+\mathcal{O}(\lambda^4)=0.
    \label{eq:eomsEGB}
\end{align}
This implies $G_{\mu\nu} = \mathcal{O}(\lambda^2)$, so the quantity appearing in the round bracket is of order $\lambda^2$ \textit{on-shell}. Hence the second line above is $\mathcal{O}(\lambda^4)$ on shell. This implies that the equations of motion of \eqref{eq:reg_EGB_action} and \eqref{eq:EGB_action} agree within the expected accuracy of the EFT.


The regularized four-derivative EFT has fourth-order equations, whereas the original EGB theory ($\beta=0$) has second-order equations. This means that the regularized theory requires Cauchy data for the metric and its time derivatives up to third order. More precisely, let $\Sigma$ be a $d=\mathcal{D}-1$-dimensional Cauchy slice with unit normal $n$. Then, initial data for the four derivative EFT consists of $(\Sigma, \gamma_{\mu \nu}, K_{\mu \nu}, (\pounds_n) K_{\mu \nu}, (\pounds_n)^2 K_{\mu \nu})$.
As explained above, one can then fix initial data for $(\pounds_n) K_{\mu \nu}$, $(\pounds_n)^2 K_{\mu \nu}$ (fast degrees of freedom) in terms of $\gamma_{\mu \nu}$, $K_{\mu \nu}$ (slow degrees of freedom) by employing an order reduction strategy, though the order reduction procedure is quite simple in this case, as we shall see now.

To carry out the order reduction, we consider the spatial projections of the equations of motion, which may be written as
\begin{equation}\label{eq:EGB_proj}
\begin{split}
\gamma^\mu_{\mu_1}\gamma^{\nu_1}_\nu E^{\mu_1}{}_{\nu_1}&=\,\gamma^\mu_{\mu_1}\gamma^{\nu_1}_\nu\,\left(G^{\mu_1}{}_{\nu_1} + \alpha\,\lambda^2\,\delta_{\nu_1\rho_1\rho_2 \rho_3 \rho_4}^{\mu_1\sigma_1 \sigma_2 \sigma_3 \sigma_4} R_{\sigma_1\sigma_2}{}^{\rho_1 \rho_2}R_{\sigma_3\sigma_4}{}^{\rho_3 \rho_4}\right)+O(\lambda^4)\\
     &=A^{\mu}{}_\nu{}^{\sigma\rho}\pounds_n K_{\sigma\rho}+M^\mu{}_\nu+O(\lambda^4)=0,
\end{split}
\end{equation}
where in the first line we used the fact that the ``regularising'' terms in the equations of motion are $O(\lambda^4)$, and in the second line we simply separated terms with $\pounds_n K_{\sigma\rho}$ (i.e. with second time derivatives of the metric) from everything else. Note that the tensors $A^\mu{}_\nu{}^{\sigma\rho}$ and $M^\mu_\nu$ depend only on $\gamma$, $R_\gamma$, $K$ and $DK$, a fact that follows from the analysis of the symmetries of the principal symbol discussed in \cite{Papallo:2017qvl}. Assuming that the initial data surface is non-characteristic, the tensor $A_{\mu \nu}{}^{\sigma\rho}$ (regarded as a linear map on the space of symmetric tensors) is invertible and one can obtain initial data for $\pounds_n K_{\mu \nu}$ in terms of the slow degrees of freedom:
\begin{equation}\label{eq:LnK}
    \pounds_n K_{\mu \nu}=-\left(A^{-1}\right)_{\mu \nu}^{\sigma\rho}M_{\sigma\rho}+O(\lambda^4)
\end{equation}
The initial data for $(\pounds_n)^2 K_{\mu \nu}$ is then obtained by taking a Lie derivative of \eqref{eq:LnK} and replacing any occurrence of $\pounds_n K_{\mu \nu}$ with the RHS of \eqref{eq:LnK}.

To solve the initial value constraints $E_{\mu \nu}n^\nu=0$ for the slow degrees of freedom, one can follow the well-known conformal methods used for solving the Einstein constraints \cite{Choquet-Bruhat:2009xil}. A conformal-transverse-traceless (CTT) decomposition allows for a recasting of the Einstein constraints as a system of second order elliptic PDEs that admit a well-posed boundary value problem under certain conditions. In the case of EGB theory, a similar strategy is also applicable: at weak coupling the conformally formulated constraints are still second order elliptic PDEs and a continuity argument may be used to establish the existence of an exact solution to the EGB constraint equations in a neighbourhood of a solution of the Einstein constraints \cite{Kovacs:2021lgk}. (See also \cite{Brady:2023dgu,Aurrekoetxea:2025kmm,Nee:2024bur} for related methods to solve the constraint equations of Einstein-scalar-Gauss-Bonnet gravity numerically.) Alternatively, one could solve the conformally formulated constraints in the form of a series expansion in $\lambda$. In this case, a perturbative solution is obtained by solving the constraints order-by-order: at each order one would solve the Einstein constraints sourced by terms depending on the lower order solutions. For a more detailed example of this in ``dynamical'' Chern-Simons gravity see \cite{Okounkova:2018abo}.

\section{Conclusions}
\vspace{-0.2cm}

In this paper we have proposed a method for dealing with the unphysical degrees of freedom appearing in well-posed formulations of the equations of various EFTs, including those of viscous hydrodynamics and of gravity. The proposal is based on the well-known idea of order reduction. The novelty is that this is applied only at the level of initial data rather than to the equations of motion. This procedure breaks formal Lorentz (or general) covariance, but the effect of this breaking is comparable to the size of the effects caused by higher-order corrections to the EFT equations of motion, as long as one restricts to inertial frames (or coordinate systems) in which EFT is manifestly valid. 

In the context of relativistic viscous hydrodynamics (see Sec. \ref{bdnksection}), we believe that the present findings remove the final potential roadblock towards a physically reliable numerical implementation of BDNK theory, fixing a long-lasting (but often overlooked \cite{Pandya:2021ief,Pandya:2022pif,Pandya:2022sff,Clarisse:2025lli}) initialization ambiguity. When BDNK is treated not merely as a set of evolution equations \eqref{BDNKEOM}, but as a complete EFT Cauchy problem equipped with a consistent initialization prescription [namely~\eqref{theuglymonster}], then the non-hydrodynamic modes are effectively absent, and the viscous fluid carries the same number of physical degrees of freedom as the ideal fluid.

We emphasize that, even if a system initially lies within the regime of validity of the EFT, there is no guarantee that the subsequent evolution will remain so. In general, non-linear effects in both hydrodynamic and gravitational systems can drive the formation of singular structures, such as shocks or black holes, where the EFT description necessarily breaks down. When this occurs, the agreement between the well-posed equations of motion and their order-reduced counterpart is generically lost, signaling the failure of the underlying EFT assumptions. In this sense, the order-reduced equation may also be used as a diagnostic tool, providing a practical criterion to monitor the consistency of the EFT approximation during the evolution. Nevertheless, the EFT may still provide an accurate effective description in regions sufficiently far from these singularities. For instance, hydrodynamics remains applicable at distances of several mean free paths from a shock front, while gravitational EFTs are expected to be reliable outside black hole horizons. In such situations, the initialization procedure proposed here should be applied only within the spacetime regions where the EFT assumptions remain valid.

\vspace{-0.2cm}
\section*{Acknowledgments}
\vspace{-0.2cm}

LG is supported by a MERAC Foundation prize grant,  an Isaac Newton Trust Grant, and funding from the Cambridge Centre for Theoretical Cosmology. HSR is supported by STFC grant no. ST/X000664/1. ADK is supported by the STFC Consolidated Grant ST/X000931/1. ADK is grateful to Katy Clough, Pau Figueras and Shunhui Yao for helpful discussions.

\appendix

\vspace{-0.2cm}
\section{The extra degrees of freedom are always fast (non-linear argument)}\label{appAAA}
\vspace{-0.2cm}

Theorem \ref{theo1} does not admit a straightforward generalization to the nonlinear regime, since there is no well-defined notion of a dispersion relation. Nevertheless, one can still convince oneself that the claim remains valid through the following simple argument.

Assume, for simplicity, that the system is spatially homogeneous, so the derivatives $\partial_j$ can be discarded. Then, the field $\phi(t)$ obeys some ordinary differential equation, for example
\begin{equation}\label{odeeom}
F(\phi,\dot{\phi})-\lambda \dddot{\phi}=0 \, ,
\end{equation}
with $F$ some non-linear function.
For $\lambda=0$, the equation reduces to a first-order one, so we only need to specify the initial value of $\phi$. However, when $\lambda\neq 0$, the equation becomes of third order, and one must additionally specify the initial values of $\dot{\phi}$ and $\ddot{\phi}$. Now, let us isolate the higher-derivative term and evaluate the equation at $t=0$, obtaining
\begin{equation}
\dddot{\phi}(0) =\dfrac{1}{\lambda} F\big(\phi(0),\dot{\phi}(0)\big)\, .
\end{equation}
As $\lambda\rightarrow 0$, it becomes evident that, for a given $\phi(0)$, a generic choice of $\dot{\phi}(0)$ leads to a divergent $\dddot{\phi}(0)$. The only way to prevent this divergence is to require that $F\big(\phi(0),\dot{\phi}(0)\big)\approx 0$, thereby fixing $\dot{\phi}(0)$ as a function of $\phi(0)$.
Proceeding further, differentiating \eqref{odeeom} and evaluating at $t=0$ gives
\begin{equation}
\ddddot{\phi}(0) =\dfrac{1}{\lambda} \left[\dfrac{\partial F}{\partial\phi}\bigg|_{t=0} \dot{\phi}(0) +\dfrac{\partial F}{\partial\dot{\phi}}\bigg|_{t=0} \ddot{\phi}(0) \right]\, .
\end{equation}
Again, unless $\dot{\phi}(0)$ and $\ddot{\phi}(0)$ are chosen in a very specific way, the fourth derivative $\ddddot{\phi}(0)$ diverges for small $\lambda$. Hence, the two additional degrees of freedom associated with freely specifying $\dot{\phi}$ and $\ddot{\phi}$ correspond to states where $\dddot{\phi}$ and/or $\ddddot{\phi}$ become very large, supporting the interpretation of these degrees of freedom as ``fast'' modes.

\section{Reducing multiple orders (non-linear argument)}\label{AppBBB}

Let us demonstrate that an iterative procedure, similar to that employed in the proof of Theorem \ref{theo2}, can also be applied to reduce the time order of the equations in the non-linear regime. Similarly to appendix \ref{appAAA}, we will assume that the space derivatives can be discarded, and we will therefore consider the ordinary differential equation
\begin{equation}
\phi^{(\mathfrak{n})}=F(\phi,\dot{\phi},...,\phi^{(\mathfrak{n}-1)})+\lambda G(\phi,\dot{\phi},...,\phi^{(\mathfrak{n})},...,\phi^{(\mathfrak{m})})+\mathcal{O}(\lambda^2)\, ,
\end{equation}
which, truncated to zeroth order in $\lambda$, becomes
\begin{equation}
\phi^{(\mathfrak{n})}=F(\phi,\dot{\phi},...,\phi^{(\mathfrak{n}-1)})+\mathcal{O}(\lambda)\, .
\end{equation}
Then, the reduction procedure works a follows:
\begin{equation}
\begin{split}
\phi^{(\mathfrak{n})}={}& F\left(\phi,\dot{\phi},...,\phi^{(\mathfrak{n}-1)}\right)+\lambda G\left(\phi,\dot{\phi},...,\phi^{(\mathfrak{n})},...,\phi^{(\mathfrak{m})}\right)+\mathcal{O}(\lambda^2)\\
={}& F\left(\phi,\dot{\phi},...,\phi^{(\mathfrak{n}-1)}\right)+\lambda G\left(\phi,\dot{\phi},...,\phi^{(\mathfrak{n})},...,\phi^{(\mathfrak{m}-1)},\dfrac{d^{\mathfrak{m}-\mathfrak{n}}}{dt^{\mathfrak{m}-\mathfrak{n}}}F\right)+\mathcal{O}(\lambda^2) \\
={}& F\left(\phi,\dot{\phi},...,\phi^{(\mathfrak{n}-1)}\right)+\lambda G_1\left(\phi,\dot{\phi},...,\phi^{(\mathfrak{n})},...,\phi^{(\mathfrak{m}-1)}\right)+\mathcal{O}(\lambda^2)\\
={}& F\left(\phi,\dot{\phi},...,\phi^{(\mathfrak{n}-1)}\right)+\lambda G_1\left(\phi,\dot{\phi},...,\phi^{(\mathfrak{n})},...,\phi^{(\mathfrak{m}-2)},\dfrac{d^{\mathfrak{m}-\mathfrak{n}-1}}{dt^{\mathfrak{m}-\mathfrak{n}-1}}F\right)+\mathcal{O}(\lambda^2)\\
={}& F\left(\phi,\dot{\phi},...,\phi^{(\mathfrak{n}-1)}\right)+\lambda G_2\left(\phi,\dot{\phi},...,\phi^{(\mathfrak{n})},...,\phi^{(\mathfrak{m}-2)}\right)+\mathcal{O}(\lambda^2)\\
={}& ...=F\left(\phi,\dot{\phi},...,\phi^{(\mathfrak{n}-1)}\right)+\lambda G_{\mathfrak{m}-\mathfrak{n}+1}\left(\phi,\dot{\phi},...,\phi^{(\mathfrak{n}-1)}\right)+\mathcal{O}(\lambda^2)\, .\\
\end{split}
\end{equation}

\bibliography{Biblio}

\label{lastpage}
\end{document}